\newtheorem{theorem}{Theorem}
\newcommand{\BfPara}[1]{{\noindent\bf#1.}\xspace}
\begin{document}

\history{Date of publication xxxx 00, 0000, date of current version xxxx 00, 0000.}
\doi{10.1109/ACCESS.2017.DOI}

\title{Joint Mobile Charging and Coverage-Time Extension for Unmanned Aerial Vehicles}
\author{\uppercase{Soohyun Park}\authorrefmark{1}, \uppercase{Won-Yong Shin}\authorrefmark{2} \IEEEmembership{Senior Member, IEEE}, \uppercase{Minseok Choi}\authorrefmark{3}, and \uppercase{Joongheon Kim}\authorrefmark{1} \IEEEmembership{Senior Member, IEEE}}
\address[1]{School of Electrical Engineering, Korea University, Seoul, Korea (e-mails: soohyun828@korea.ac.kr, joongheon@korea.ac.kr)}
\address[2]{School of Mathematics and Computing (Computational Science and Engineering), Yonsei University, Seoul, Korea (e-mail: wy.shin@yonsei.ac.kr)}
\address[3]{Department of Telecommunication Engineering, Jeju National University, Jeju, Korea (e-mail: ejaqmf@jejunu.ac.kr)}
\tfootnote{This research was supported by the National Research Foundation of Korea (NRF) (2021R1A2C3004345, 2020R1G1A1101164), the Institute for Information and Communications Technology Promotion (IITP) funded by MSIT under Grant 2018-0-00170, Virtual Presence in Moving Objects through 5G (2020R1G1A1101164), and the National University Development Project funded by the Ministry of Education (Korea) and NRF (2021).}

\markboth
{S. Park \headeretal: Joint Mobile Charging and Coverage-Time Extension for Unmanned Aerial Vehicles}
{S. Park \headeretal: Joint Mobile Charging and Coverage-Time Extension for Unmanned Aerial Vehicles}

\corresp{Corresponding authors: Won-Yong Shin, Minseok Choi, and Joongheon Kim (e-mails: wy.shin@yonsei.ac.kr, ejaqmf@jejunu.ac.kr, joongheon@korea.ac.kr).}

\begin{abstract}
In modern networks, the use of drones as mobile base stations (MBSs) has been discussed for coverage flexibility. However, the realization of drone-based networks raises several issues. One of critical issues is drones are extremely power-hungry. To overcome this, we need to characterize a new type of drones, so-called charging drones, which can deliver energy to MBS drones. Motivated by the fact that the charging drones also need to be charged, we deploy ground-mounted charging towers for delivering energy to the charging drones. We introduce a new energy-efficiency maximization problem, which is partitioned into two independently separable tasks. More specifically, as our first optimization task, two-stage charging matching is proposed due to the inherent nature of our network model, where the first matching aims to schedule between charging towers and charging drones while the second matching solves the scheduling between charging drones and MBS drones. We analyze how to convert the formulation containing non-convex terms to another one only with convex terms. As our second optimization task, each MBS drone conducts energy-aware time-average transmit power allocation minimization subject to stability via Lyapunov optimization. Our solutions enable the MBS drones to extend their lifetimes; in turn, network coverage-time can be extended.
\end{abstract}

\begin{keywords}
Cellular network, charging drone, coverage-time, mobile base station, scheduling.
\end{keywords}

\titlepgskip=-15pt

\maketitle
\section{Introduction}\label{sec:intro}
In modern communication systems, the concept of mobile base stations (MBSs) has been widely and actively discussed in order to establish flexible wireless and cellular networking connections~\cite{cst19saad,twc19saad,wc19saad,wc1,wc2,wc3,infocom19-1}.  
To realize such MBS services, many mobile and network computing systems, i.e., vehicles and drones, have been used~\cite{tvt19shin,tits19-2,spawc18bennis,tmc20,infocom19-2,globecom18}; 1) the vehicles can be used in urban areas to improve the capacity of wireless and cellular systems for serving more users and 2) the drones, also referred to as unmanned aerial vehicles (UAVs), are used for extending the coverage of wireless and cellular services in extreme areas.
Because MBS drone technologies aim at offering robust and flexible connectivity, Internet-of-UAVs (I-UAVs) are widely discussed nowadays. 

However, the MBS drones are extremely energy-limited. In modern embedded drone platforms, the operation time is limited to only a few hours due to battery limitation. Thus, energy-efficient communication plays a crucial role in MBS drones.
The energy-efficient operation is definitely helpful for extending the MBS drone operation times; in turn, it is also useful for extending the network coverage and service time. As discussed in~\cite{ton10,vtc06s}, the concept of coverage-time is defined as the time until one MBS drone totally exhausts its own energy. Therefore, it is obvious that extending the coverage-time is required for network lifetime extension.
To this end, new charging techniques are required. In order to realize seamless MBS operation in drones (i.e., working as MBS drones while being simultaneously charged), wireless charging technologies~\cite{iotj18lee,iotj19lee,iotj2018cho,twc15lee,cm19cho} but also a new concept of mobile drone charging and mobile charging drone concepts (i.e., another type of drones for charging MBS drones such as aerial tankers)~\cite{iotj2018cho} are essentially required.

There have been several recent studies on how to charge MBS drones in the literature. Among them, the use of charging vehicles as proposed in our previous work~\cite{tvt19shin}. Even though it shows reasonable performance improvements, the drones to be charged should move down toward the charging vehicle. Then, the role of MBS drones may not be fulfilled properly.
To overcome the drawback, we utilize a new type of drones, so-called {\em charging drones}, which are designed for providing power sources via wireless charging to MBS drones, which has been widely studied in the literature~\cite{wpt1,ref1}. In addition, due to the fact that the charging drones are also battery-operated, they also need to be energy-efficient with a demand of being charged. This motivates us to characterize a charging infrastructure that is composed of charging towers in this paper. Since the charging towers are ground-mounted, they are capable of acquiring power sources without strict limitations. Note that, due to the fact that the towers have charging plates, the charging drones can be served via wireless charging.
Such a mobile drone charging system can be widely used as the next-generation UAV model for surveillance and seamless Internet connection in many extreme wilds where human access may be unavailable, \textit{e.g.} mountainous areas or vast coastlines.

In this paper, we propose a new energy-efficiency maximization problem, which is partitioned into two independently separable optimization tasks, under our network model. Toward this end, as our first optimization task, we design two-stage charging matching/scheduling, i.e., (i) charging matching between charging towers and charging drones and (ii) charging matching between charging drones and MBS drones. For the second case, in addition to the matching decision, the amounts of allocating power from charging drones to matched MBS drones should be optimally computed due to the limited power in charging drones. To solve this issue, we first prove that the problem is non-convex, thereby converting the original non-convex formulation into a convex setting that can guarantee optimal solutions.
After conducting this two-stage charging matching/scheduling, as our second optimization task we design energy-aware data transmission in each MBS drone due to the fact that the drones are power-hungry. Inspired by the Lyapunov optimization~\cite{book2010sno}, a time-average energy consumption minimization framework is proposed by controlling the transmit power in each MBS drone under system stability. We note that the beauty of this framework is the realization of distributed operations, i.e., sharing information among MBS drones is not required. 

\BfPara{Related Work}
As we mentioned above, drones widely used for many applications, \textit{e.g.} video provisioning~\cite{icon19kwon}, mobile edge computing~\cite{iwcmc19}, and aerial information sensing~\cite{infocom19-3}, can be a key solution for MBS services in cellular networks.
In order to use the drones for cellular networks, theoretical performance analysis under consideration of practical antenna configurations~\cite{rev1} and mobility patterns~\cite{rev2} is essentially required.

There are have been a great deal of studies on drone trajectory optimization for various objectives.
Most researches aim at carrying out transmission rate maximization, transmit power control for energy-efficiency, and latency minimization.
The research in~\cite{spawc18} aims at the maximization of cumulative information rate (i.e., transmission sum rate) by optimally controlling the direction of drones using reinforcement learning (RL). 
If there are more than two users, then the drone will move through the optimal choice via various Q-learning algorithms.
However, since researchers in~\cite{spawc18} do not consider drone's energy, they get the optimal trajectories for two users considering communication obstacles.
In~\cite{twc19}, researchers formulated a problem to find the optimal drone trajectory in multi-drone environments using a dynamic non-cooperative game theory; however, a charging technique was also excluded.
There was a study that attempts to conduct joint power and trajectory optimization, where drones engage in cooperative communications on amplify-and-forward mode~\cite{wc2018}. 
As a result, the drone finds an optimal location, \textit{i.e.}, the distance between users and the BS, that minimizes outage probabilities. 
In addition, power control can be performed to consume less communication power/energy between the drones which are close to each other~\cite{p1,p2,p3}. 
Nevertheless, the aforementioned studies do not take into account the concept of mobile drone charging, which can be thought of as a practically important scenario.

\BfPara{Contributions} The main contributions of this research are four-fold and are summarized as follows.
\begin{itemize}
    \item It is the first attempt to characterize a new mobile UAV charging system aimed at charging both MBS drones and charging drones, which is a more feasible scenario in practice. 
    \item In the model, we formulate two-stage optimization problems for charging the two-types of drones. We analyze that the optimization problem for scheduling between charging drones and MBS drones is non-convex. We then present a method of transforming the non-convex formulation into  the convex setting, thus resulting in achieving optimal solutions.
    \item In addition, we perform distributed time-average transmit power allocation at each MBS drone subject to queue stability by virtue of Lyapunov optimization theory. Our method is beneficial in the sense of guaranteed queue stability and fully distributed operation at MBS drones.
    \item Through data-intensive simulations, it is demonstrated that our matching method remarkably outperforms several baseline schemes in terms of the average residual energy and the fairness among served drones. The coverage-time and queue stability are also evaluated to validate the effectiveness of our time-average transmit power allocation method at each MBS drone.
\end{itemize}

\BfPara{Organization} The rest of this paper is organized as follows. Section~\ref{sec:2} presents the reference network model and related work. 
The proposed two-stage mobile charging and traffic-aware coverage-time extension method is explained in Section~\ref{sec:3}.
Section~\ref{sec:4} evaluates the performance of the proposed method.
Lastly, Section~\ref{sec:4} concludes this paper.

\begin{figure}[t]
    \centering
        \includegraphics[width =0.99\linewidth]{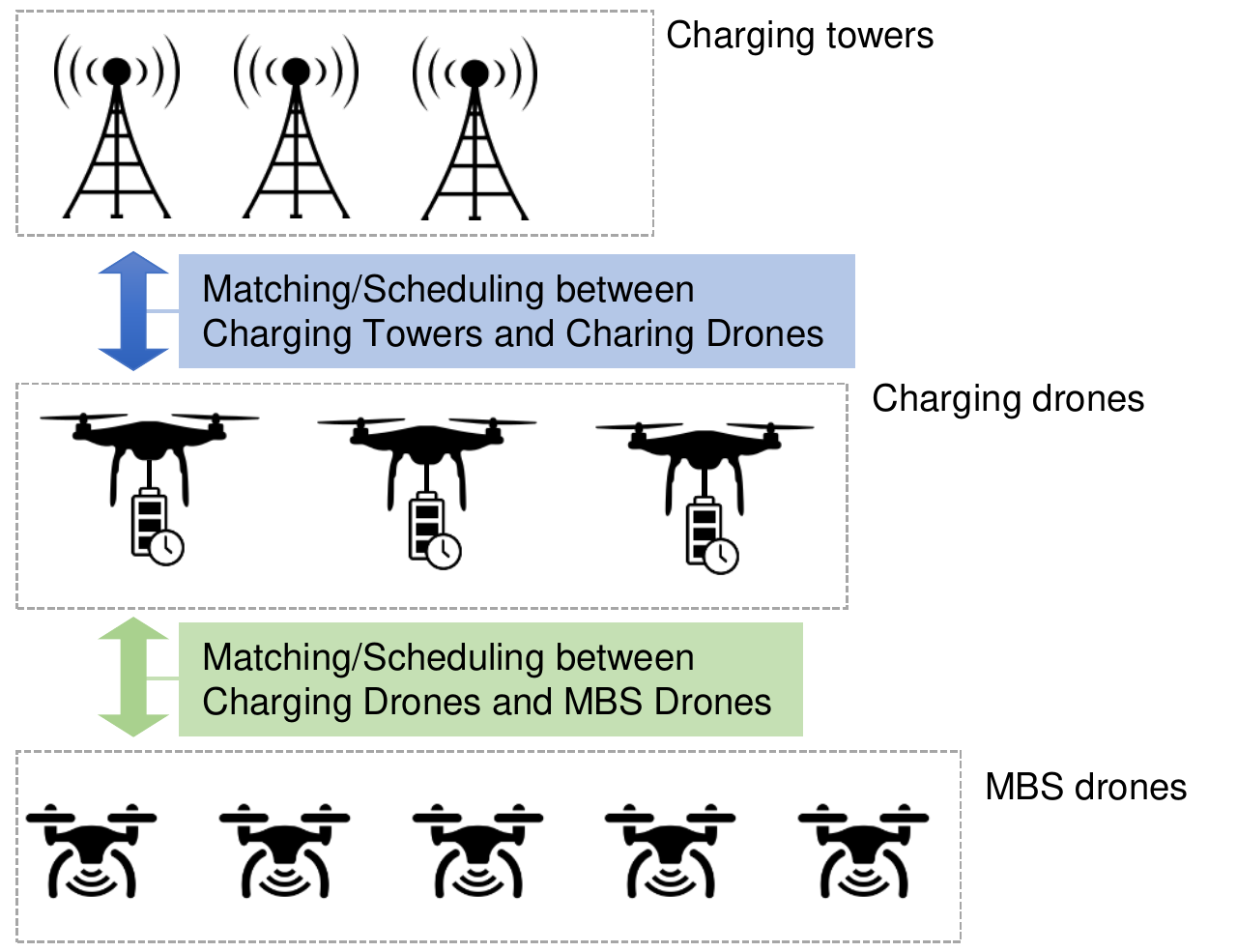}
    \caption{Reference network model where three types of elements are deployed. Here, two-stage matching/scheduling is required due to the inherent nature of our network architecture.}
    \label{fig:fig_netmodel}
\end{figure}

\section{System Model}\label{sec:2}
As a potential UAV configuration, we assume our network model consisting of three types of elements, i.e., charging towers, charging drones, and MBS drones, as shown in Fig.~\ref{fig:fig_netmodel}. 

First of all, the charging tower is a static and ground-mounted infrastructure which can provide power sources for charging drones. The main objective and purpose of the charging towers is for delivering power/energy sources to MBS drones via charging drones for extending cellular network coverage-time. 
Because the towers are ground-mounted, they can get power sources without limitations. Note that the charging towers are equipped with several charging plates, thus they can charge multiple charging drones simultaneously.

Next, the charging drones are flying moving objects which can be charged by charging towers and then provide their charged powers to their matched/scheduled MBing drones. 
The main objective is for providing powers to MBS drones, as mentioned. 
Because they are mobile and battery-operated, energy-efficient operations are essential. 

Lastly, the MBS drones work as mobile cellular base stations, in specific fixed-position cellular network coverage areas, which can be charged by charging drones.
Thus, we design the MBS drones to receive energy supply from charging towers through the charging drones, where the MBS drones and charging towers are fixed in their own positions.
Because they are also battery-operated, energy-efficient operations are obviously required.

In this paper, the sets of charging towers, charging drones, and MBS drones are denoted as $\mathcal{T}$, $\mathcal{C}$, and $\mathcal{M}$, respectively.

\section{Joint Mobile Charging and Coverage-Time Extension}\label{sec:3}

This section consists of algorithm design concepts and rationale (refer to Sec.~\ref{sec:3-1}) and the details (refer to Sec.~\ref{sec:3-2}).
\subsection{Algorithm Design Rationale}\label{sec:3-1}
The proposed method in this paper aims at maximizing energy-efficiency of our UAV charging system, which consists of charging towers, charging drones, and MBS drones, subject to system stability. To this end, we formulate our problem:

\begin{multline}
    \max: \underbrace{\Psi_{(j,k)}\left(E_{j}^{c},e_{j}^{c}[t]\right) + \Psi_{(i,j)}\left(E_{j}^{c},E_{i}^{m},e_{j}^{c}[t],e_{i}^{m}[t]\right)}_{\text{mobile charging matching}}
    \\
    - \underbrace{\Phi(\alpha[t])}_{\text{dynamic TX power allocation}}
    \label{eq:total}
\end{multline}

where $\forall i\in\mathcal{M}$, $\forall j\in\mathcal{C}$, and $\forall k \in \mathcal{T}$ as $\mathcal{M}$, $\mathcal{C}$ and $\mathcal{T}$ are the sets of MBS drones, charging drones, and charging towers, respectively;
$\Psi_{(j,k)}\left(E_{j}^{c},e_{j}^{c}[t]\right)$ is the matching function between charging tower $k\in\mathcal{T}$ and charging drone $j\in\mathcal{C}$ so that the energy delivered from charging tower $k$ to charging drone is maximized when $E_{j}^{c}$ and $e_{j}^{c}[t]$ are defined as the energy capacity (i.e., the amount of energy when the drone is fully charged) at charging drone $j$ and the residual energy at charging drone $j$, respectively; and $\Psi_{(i,j)}\left(E_{j}^{c},E_{i}^{m},e_{j}^{c}[t],e_{i}^{m}[t]\right)$ is the matching function between charging drone $j\in\mathcal{C}$ and MBS drone $i\in\mathcal{M}$ so that the energy delivered from charging drone $j$ to MBS drone $i$ is maximized when $E_{i}^{m}$
and $e_{i}^{m}[t]$ are defined as the energy capacity at MBS drone $i$
and the residual energy at MBS drone $i$, respectively. Thus, after the optimal two-stage mobile charging matching/scheduling, the sum of the first two terms in \eqref{eq:total},  $\Psi_{(j,k)}\left(E_{j}^{c},e_{j}^{c}[t]\right) + \Psi_{(i,j)}\left(E_{j}^{c},E_{i}^{m},e_{j}^{c}[t],e_{i}^{m}[t]\right)$, represents the overall maximum energy delivery from energy sources to individual MBS drones. In addition, the third term in \eqref{eq:total}, $\Phi(\alpha[t])$, is the time-average energy consumption function with respect to transmit power allocation action $\alpha[t]$ at time $t$. Then, we aim at minimizing $\Phi(\alpha[t])$ at individual MBS drones subject to system/queue stability by controlling $\alpha[t]$ at each time. We shall specify the three functions ($\Psi_{(j,k)}, \Psi_{(i,j)}, and, \Phi(\alpha[t])$) in Section III-B where each mathematical program is provided along with its algorithm details.

It is worth noting that the mobile charging matching is carried out in terms of the matching indices between $\forall k\in\mathcal{T}$ and $\forall j\in\mathcal{C}$ and between $\forall j\in\mathcal{C}$ and $\forall{i}\in\mathcal{M}$, whereas the transmit power allocation is performed in terms of the allocation decision $\alpha[t]$. This implies that the above two tasks, corresponding to $\Psi_{(j,k)}\left(E_{j}^{c},e_{j}^{c}[t]\right) + \Psi_{(i,j)}\left(E_{j}^{c},E_{i}^{m},e_{j}^{c}[t],e_{i}^{m}[t]\right)$ and $\Phi(\alpha[t])$, do not share any control parameters and control actions are independent of each other. Thus, those two optimization tasks can be linearly and independently separable.

The proposed method in this paper is defined in linearly separable two algorithms, i.e., (i) two-stage mobile charging matching and (ii) distributed time-average energy-efficient transmission optimization (i.e., time-average transmit power minimization) subject to stability.

\begin{itemize}
    \item For the first, we design our charging matching method in two-stage, i.e., matching between charging towers and charging drones and matching between charging drones and MBS drones. The reason why joint optimization is not chosen is that the different type of drones can be operated by different service providers, thus, separable operations are needed for scalability. More details are in Sec.~\ref{sec:algo1}.
    \item For the second, in order to achieve more battery-saving, energy-efficient operations are essential. Thus, a dynamic and distributed energy-aware transmit power allocation algorithm is designed. The algorithm dynamically works based on queue-backlog status, and this concept is formulated via Lyapnov optimization theory.
More details are in Sec.~\ref{sec:algo2}.
\end{itemize}

The proposed method assumes that the joint mobile charging and coverage-time extension computation conducts at a centralized controller which is connected to charging towers in order to guarantee the stabilized power supply (refer to \cite{wc1,tvt19shin,tmc20,infocom19-2,icon19kwon,iwcmc19,spawc18,wc2018} and references therein for the detailed description).

\begin{figure}[t]
    \centering
        \includegraphics[width =0.99\linewidth]{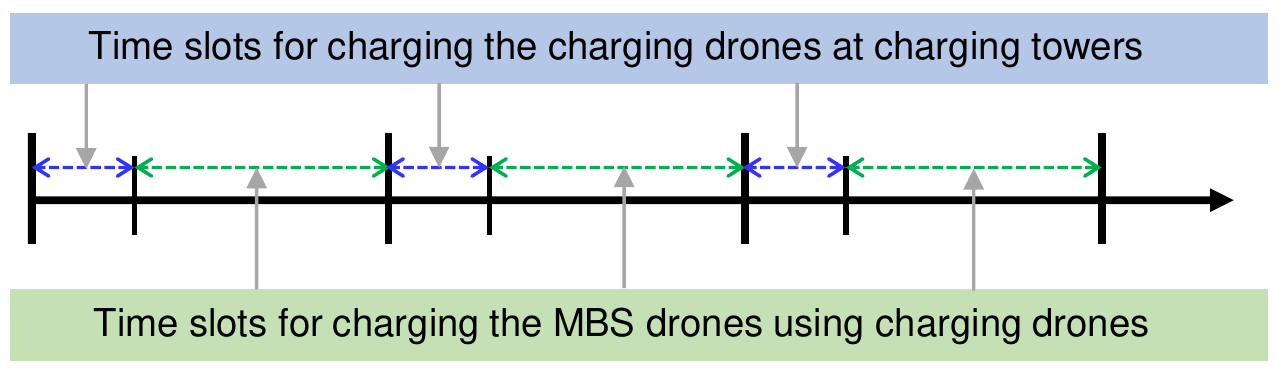}
    \caption{Unit time organization: In each unit time $u$, two-stage operation conducts (i.e., $u=u^{c}+u^{m}$), i.e., (i) $u^{c}$ for charging the charging drones $\forall j\in\mathcal{C}$ and at charging towers $\forall k\in\mathcal{T}$ and (ii) $u^{m}$ for charging MBS drones $\forall i\in\mathcal{M}$ using scheduled/matched charging drones $\forall j\in\mathcal{C}$.}
    \label{fig:fig_timescale}
\end{figure}

\subsection{Algorithm Details}\label{sec:3-2}
As discussed, the proposed method consists of two algorithms, i.e., (i) two-stage mobile charging matching/scheduling (refer to Sec.~\ref{sec:algo1}) and (ii) energy-aware transmit power allocation subject to queue stability at MBS drones (refer to Sec.~\ref{sec:algo2}). Lastly, the computational complexity of the proposed method is presented (refer to Sec.~\ref{sec:algo-complexity}).

\subsubsection{Two-stage mobile charging matching/scheduling}\label{sec:algo1}
The proposed charging matching consists of two-stages as follows:
\begin{itemize}
    \item Matching between charging towers and charging drones
    \item Matching between charging drones and MBS drones
\end{itemize}

Each matching stage conducts within a single unit time $u$, as illustrated in Fig.~\ref{fig:fig_timescale}. 
Each stage is explained as follows, for more detailed optimization solving procedures.

\BfPara{Matching/scheduling between charging towers and charging drones}
Given $E_j^c$ and $e_j^c[t]$, the matching function $\Psi_{(j,k)}(E_j^c,e_j^c[t])$ is materialized below. Specifically, mathematical program for matching between charging drones and charging towers can be formulated as follows:

\begin{eqnarray}
    \max: & & \sum_{\forall j\in\mathcal{C}} \sum_{\forall k\in\mathcal{T}} 
    \left(E_{j}^{c}-e_{j}^{c}[t]\right)\cdot x_{(k,j)}^{t}[t]
    \label{eq:s1obj} \\
    \text{s.t.} & & 
    \sum_{\forall j\in\mathcal{C}} x_{(k,j)}^{t}[t] \leq n_{k}^{t}, \forall k\in \mathcal{T} \label{eq:s1c1} \\
     & & 
    \sum_{\forall k\in\mathcal{T}} x_{(k,j)}^{t}[t] \leq 1, \forall j\in \mathcal{C} \label{eq:s1c2}
\end{eqnarray}
where 
$x_{(k,j)}^{t}[t]$ is the matching/scheduling index where it is $1$ when the charging  tower $k$ where $\forall k \in\mathcal{T}$ is selected to provide powers to charging drone $j$ where $\forall j \in \mathcal{C}$ (whereas it is $0$ when $k$ and $j$ are not matched/scheduled) via wireless charging technologies, and 
$n_{k}^{t}[t]$ is the number of charging plates at charging tower $k$ where $\forall k \in\mathcal{T}$, respectively.
Note that all parameters in (\ref{eq:s1obj})--(\ref{eq:s1c2}) except $x_{(k,j)}^{t}[t]$ where $\forall j\in \mathcal{C}$, $\forall k\in\mathcal{T}$ are constants, thus, this mathematical program is for computing optimal $x_{(k,j)}^{t}[t]$ where $\forall j\in \mathcal{C}$, $\forall k\in\mathcal{T}$, which can maximize (\ref{eq:s1obj}). 

In (\ref{eq:s1obj}), $\left(E_{j}^{c}-e_{j}^{c}[t]\right)$ means the difference between full energy capacity and current residual energy at charging drone $j$. Thus, the amount means how much energy can be charged for the charging drone $j$ (which is defined as \textit{charging capacity} at charging drone $j$). As presented in (\ref{eq:s1obj}), the matching/scheduling between charging towers and charging drones is for maximizing the summation of the charging capacity of matched/scheduled charging drones. 

In this mathematical program, we have to state that each charging tower can charge $n_{k}^{t}$ number of charging drones where $n_{k}^{t}$ is the number of charging plates at charging tower $k$ where $\forall k \in \mathcal{T}$; and it is formulated as a constraint in (\ref{eq:s1c1}). Moreover, each charging drone can be served by a single charging tower, as formulated in (\ref{eq:s1c2}).

Once this matching/scheduling decision is made, each charging tower starts the charging operation for its matched/scheduled charging drone. The amount of charging energy via wireless charging is determined as follows:

\begin{equation}
    \mathcal{E}_{k}^{t}\cdot \eta_{k}^{t}\cdot\eta_{j}^{c}\cdot \left(u^{t}-\frac{d_{(k,j)}[t]}{s_{j}}\right)
    \label{eq:charging1}
\end{equation}
where $\mathcal{E}_{k}^{t}$ is the amount of charging energy at charging tower $k$ where $\forall k\in\mathcal{T}$, 
$\eta_{k}^{t}$ is wireless charging efficiency at charging tower $k$ where $\forall k\in\mathcal{T}$, 
$\eta_{j}^{c}$ is the wireless charging efficiency of charging drone $\forall j \in\mathcal{C}$, 
$u^{t}$ is the operation time for this matching between charging towers and charging drones, 
$d_{(k,j)}[t]$, is the distance between charging tower $k$ where $\forall k\in\mathcal{T}$ and charging drone $j$ where $\forall j\in\mathcal{C}$,
and 
$s_{j}$ is the speed of charging drone $j$ where $\forall j \in\mathcal{C}$. 

Then, the final energy status in each matched/scheduled charging drone $j$ where $\forall j\in\mathcal{C}$ is as follows:

\begin{equation}
    \min\left\{e_{j}^{c}[t] + \underbrace{\mathcal{E}_{k}^{t}\cdot \eta_{k}^{t}\cdot \eta_{j}^{c}\cdot\left(u^{t}-\frac{d_{(k,j)}[t]}{s_{j}}\right)}_{\text{Amount of charging via wireless charging}}, E_{j}^{c}\right\}
    \label{eq:charging2}
\end{equation}
due to the fact that the summation of current residual energy and charged energy cannot exceed the full energy capacity at charging drone $j$ where $\forall j\in\mathcal{C}$. 

\BfPara{Matching/scheduling between charging drones and MBS drones}
Given $E_j^c$, $E_i^m$, $e_j^c[t]$, and $e_i^m[t]$, another matching function $\Psi_{(i,j)}(E_j^c,E_i^m,e_j^c[t],e_i^m[t])$ is materialized below. Specifically, mathematical program for matching between charging drones and MBS drones can be formulated as follows:

\begin{eqnarray}
    \max: & & \sum_{\forall j\in\mathcal{C}} \sum_{\forall i\in\mathcal{M}} 
    v_{(i,j)}[t] \cdot x_{(i,j)}^{m}[t]
    \label{eq:s2obj} \\
    \text{s.t.} & & 
    \sum_{\forall j\in\mathcal{C}} e_{(i,j)}[t] \cdot \eta_{j}^{c}\cdot
    \eta_{i}^{m}\cdot x_{(i,j)}^{m}[t] \leq \nonumber \\
    & & \quad\quad\quad\quad\quad\quad\quad\quad E_{i}^{m}-e_{i}^{m}[t], \forall i\in \mathcal{M} \label{eq:s2c1} \\
     & & 
    \sum_{\forall j\in\mathcal{C}} x_{(i,j)}^{m}[t] \leq n_{i}^{m}, \forall i\in \mathcal{M} \label{eq:s2c2} \\
     & & 
    \sum_{\forall i\in\mathcal{M}} x_{(i,j)}^{m}[t] \leq 1, \forall j\in \mathcal{C} \label{eq:s2c3} \\
     & & 
     \sum_{\forall i\in\mathcal{M}} e_{(i,j)}[t] \leq e_{j}^{c}[t], \forall j\in \mathcal{C} \label{eq:s2c4} \\
     & & 
     e_{(i,j)}[t] \geq 0, \forall i\in \mathcal{M}, \forall j\in \mathcal{C} \label{eq:s2c5}
\end{eqnarray}
where $e_{(i,j)}[t]$ is the energy to charge for MBS drone $i$ where $\forall i \in\mathcal{M}$ via charging drone $j$ where $\forall j \in\mathcal{C}$ which we have to calculate, i.e., decision variables, and 
$x_{(i,j)}^{m}[t] \in \mathcal{F}_{(i,j)}[t]$ where $\forall i\in \mathcal{M}$, $\forall j\in \mathcal{C}$, respectively.
Here, $\mathcal{F}_{(i,j)}[t]$ is the set of feasible matching between charging drones $j$ where $\forall j\in\mathcal{C}$ and MBS drones $i$ where $\forall i\in\mathcal{M}$ at time $t$. 

In (\ref{eq:s2obj}), $v_{(i,j)}[t]$ is defined as the value while a charging drone $j$ where $\forall j \in\mathcal{C}$, is scheduled for charging a mobile BS drone $i$ where $\forall i \in\mathcal{M}$, via wireless charging technologies. The $v_{(i,j)}[t]$ can be formulated as follows:
\begin{equation}
    v_{(i,j)}[t] =
    \left(u^{m}-\frac{d_{(i,j)}[t]}{s_{j}}\right)\cdot
    \eta_{j}^{c}\cdot
    \eta_{i}^{m}\cdot
    \frac{e_{(i,j)}^{c,\max}}{e_{i}^{m}[t]}\cdot 
    \left(e_{j}^{c}[t]-e_{(i,j)}[t]\right)
    \label{eq:value}
\end{equation}
where $e_{(i,j)}^{c,\max}\triangleq\max\left\{e_{j}^{c}[t]-e_{(i,j)}^{c}[t],0\right\}$
and
$u^{m}$ is operation time for this mobile charging matching,
$d_{(i,j)}[t]$ is a distance between charging drone $\forall j \in\mathcal{C}$ and MBS drone $\forall i \in\mathcal{M}$,
$\eta_{i}^{m}$ is the wireless charging efficiency of mobile BS drone $i$ where $\forall i \in\mathcal{M}$, 
and
$e_{j}^{c}[t]$ is the residual energy at charging drone $j$ where $\forall j \in\mathcal{C}$,
respectively.
$e_{(i,j)}^{c}[t]$ is defined as the energy consumption while a charging drone $\forall j \in\mathcal{C}$ approaches an MBS drone $\forall i \in\mathcal{M}$. It can be formulated as follows:
\begin{equation}
    e_{(i,j)}^{c}[t] = 
    \int_{0}^{\Delta t_{(i,j)}}
    f_{e}(t)dt
\end{equation}
where $\Delta t_{(i,j)}\triangleq \frac{d_{(i,j)}[t]}{s_{j}}$ and $f_{e}(t)$ stands for the energy expenditure in time $t$.
Here, $\max\left\{e_{j}^{c}[t]-e_{(i,j)}^{c}[t],0\right\}$ means that if current charging drone $j$ where $\forall j\in\mathcal{C}$ does not have enough residual energy to move to MBS drone $i$ where $\forall i\in\mathcal{M}$, the value becomes $0$, in turn, $v_{(i,j)}[t]=0$. Finally, the charging drone $j$ where $\forall j\in\mathcal{C}$ cannot be matched with the MBS drone $i$ where $\forall i\in\mathcal{M}$ because $v_{(i,j)}[t]=0$ (semantically, the charging drone does not have enough energy to move to the MBS drone). 
Note that all variables in (\ref{eq:value}), except $e_{(i,j)}[t]$, are constants.

The summation of charged energy by each matched/scheduled charging drones for one MBS drone $i$ where $\forall i\in\mathcal{M}$ cannot exceed the charging capacity, i.e., the difference between full energy capacity and current residual energy at time $i$ (i.e., $E_{i}^{m}-e_{i}^{m}[t]$ where $\forall i\in\mathcal{M}$). This is formulated in \eqref{eq:s2c1}, and then it can be seen that the charged energy from charging drone $j$ to MBS drone $i$ can be formulated as $e_{(i,j)}[t] \cdot \eta_{j}^{c}\cdot \eta_{i}^{m}$.

In terms of matching, each MBS drone $i$ can be matched/scheduled with $n_{i}^{m}$ charging drones where $n_{i}^{m}$ stands for the charging plates at the MBS drone $i$ where $\forall i \in\mathcal{M}$, as stated in \eqref{eq:s2c2}. 
Similarly, each charging drone $i$ can serve only one MBS drone, as stated in \eqref{eq:s2c3}. 

The energy charging amount at charging drone $j$ where $\forall j\in\mathcal{C}$ should between $0$ and current residual energy status at time $t$, $e_{j}^{c}[t]$ where $\forall j\in\mathcal{C}$, as formulated in \eqref{eq:s2c4} and \eqref{eq:s2c5}.

\begin{theorem}
The mathematical program in (\ref{eq:s2obj})--(\ref{eq:s2c5}) is non-convex optimization. 
\label{thm:thm1}
\end{theorem}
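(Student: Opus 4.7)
The plan is to show that when the explicit form of $v_{(i,j)}[t]$ from (\ref{eq:value}) is substituted into the objective (\ref{eq:s2obj}), a bilinear product of the two decision variables $e_{(i,j)}[t]$ and $x_{(i,j)}^{m}[t]$ appears; this bilinear term prevents the objective from being concave on the joint decision space. Since a convex maximization requires a concave objective over a convex feasible set, non-concavity of the objective is sufficient to conclude that the program (\ref{eq:s2obj})--(\ref{eq:s2c5}) is non-convex.

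First I would substitute (\ref{eq:value}) into (\ref{eq:s2obj}). Every factor other than $(e_{j}^{c}[t]-e_{(i,j)}[t])$ and $x_{(i,j)}^{m}[t]$ is a constant with respect to the decision variables, so I would lump these into a non-negative coefficient $C_{ij}$. Each term of the objective then expands as
\begin{equation*}
C_{ij}\bigl(e_{j}^{c}[t]-e_{(i,j)}[t]\bigr)\,x_{(i,j)}^{m}[t] = C_{ij}\,e_{j}^{c}[t]\,x_{(i,j)}^{m}[t] - C_{ij}\,e_{(i,j)}[t]\,x_{(i,j)}^{m}[t],
\end{equation*}
exposing a bilinear cross-term $-C_{ij}\,e_{(i,j)}[t]\,x_{(i,j)}^{m}[t]$ in two decision variables.

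Next I would freeze all decision variables except a single pair $(e_{(i,j)}[t], x_{(i,j)}^{m}[t])$ and compute the Hessian of the restricted objective in these two variables. Only the bilinear term contributes, giving
\begin{equation*}
H = \begin{pmatrix} 0 & -C_{ij} \\ -C_{ij} & 0 \end{pmatrix},
\end{equation*}
with eigenvalues $\pm C_{ij}$. Because $H$ is indefinite, the objective is neither concave nor convex on the joint decision space, and this remains true even after relaxing $x_{(i,j)}^{m}[t]$ from the binary matching values $\{0,1\}$ to the continuous interval $[0,1]$. Thus the obstruction is structural rather than a mere combinatorial artifact of the integrality of the matching indicators. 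The same bilinear product also appears on the left-hand side of (\ref{eq:s2c1}), so the feasible set is likewise non-convex, which reinforces but is not needed for the conclusion.

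The main subtlety I foresee is being explicit about the notion of non-convexity for a maximization problem: namely, that a convex program would require a concave objective over a convex feasible set, and that the failure of either suffices for non-convexity. Once this is stated, the proof reduces to extracting the bilinear term by straight substitution and a one-line Hessian check; no delicate estimates are required.
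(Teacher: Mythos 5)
Your proposal is correct and follows essentially the same route as the paper's own proof: isolate the bilinear product $e_{(i,j)}[t]\cdot x_{(i,j)}^{m}[t]$ arising from substituting \eqref{eq:value} into \eqref{eq:s2obj}, restrict to a single pair of decision variables, and exhibit the indefinite $2\times 2$ Hessian with eigenvalues of both signs (the paper does the same for the objective and additionally for constraint \eqref{eq:s2c1}, which you correctly note is reinforcing but not necessary). Your version is in fact slightly more careful, keeping the lumped constant $C_{ij}$ explicit and stating the concave-objective criterion for a convex maximization precisely.
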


\begin{proof}
Here, we have to prove that (\ref{eq:s2obj}) and (\ref{eq:s2c1}) are not convex in this mathematical program (\ref{eq:s2obj})--(\ref{eq:s2c5}).
Note that the proof considers the simplest case at first, i.e., $|\mathcal{C}|=|\mathcal{M}|=1$.
In this case, the (\ref{eq:s2obj}) becomes $v_{(i,j)}[t]\cdot x_{(i,j)}^{m}[t]$ where $v_{(i,j)}[t]$ is defined in (\ref{eq:value}); and let this equation be denoted by $f_{1}$. To show that this given equation is non-convex, the second-order Hessian of this given real function should be non-positive definite~\cite{convex,tbc2013joongheon,icc2013joongheon,usc2014joongheon}. The Hessian $\bigtriangledown^{2}f_{1}$ is as follows where the two variables in (\ref{eq:s2obj}) are $e_{(i,j)}[t]$ and $x_{(i,j)}^{m}[t]$:
$\begin{bmatrix}
0  & -1 \\
-1 & 0 
\end{bmatrix}
$ and then the corresponding two eigenvalues are $\pm 1$. 
These values are not all non-negative, which shows that the Hessian is not positive definite, thus, finally it proves that the optimization function is non-convex.

For (\ref{eq:s2c1}), in a similar way, its second-order Hessian matrix is as
$
\begin{bmatrix}
0  & \eta_{j}^{c}\cdot
    \eta_{i}^{m} \\
\eta_{j}^{c}\cdot
    \eta_{i}^{m} & 0 
\end{bmatrix}
$, and then the corresponding two eigenvalues are $\pm \left(\eta_{j}^{c}\cdot
    \eta_{i}^{m}\right)$, and thus the values are not all non-negative, i.e., it also proves that the optimization function is non-convex.

Finally, in our mathematical program (\ref{eq:s2obj})--(\ref{eq:s2c5}), two terms, i.e.,  (\ref{eq:s2obj}) and (\ref{eq:s2c1}), are not non-convex. Thus, our given mathematical program (\ref{eq:s2obj})--(\ref{eq:s2c5}) is not convex.
\end{proof}

According to the fact that optimal solutions cannot be obtained in non-convex programming, it is required to convert this non-convex programming to convex programming, if possible.

\begin{theorem}
\label{thm:thm2}
For the given non-convex optimization program (\ref{eq:s2obj})--(\ref{eq:s2c5}), introducing
\begin{multline}
        \sum_{\forall j\in\mathcal{C}} e_{(i,j)}[t] \cdot \eta_{j}^{c}\cdot
    \eta_{i}^{m} \leq \\ \left(E_{i}^{m}-e_{i}^{m}[t]\right)\cdot x_{(i,j)}^{m}[t], \forall i\in \mathcal{M}
    \label{eq:thm}
\end{multline}
instead of (\ref{eq:s2c1}) makes the program convex.
\end{theorem}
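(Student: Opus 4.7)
The plan is to verify, piece by piece, that after substituting (\ref{eq:thm}) for (\ref{eq:s2c1}) every component of the mathematical program passes the second-order convexity test already adopted in Theorem \ref{thm:thm1}. I would treat the new constraint first, then the remaining constraints, and finally the objective.

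For the new constraint (\ref{eq:thm}) itself, the key observation is that the bilinear cross-term $e_{(i,j)}[t]\cdot x_{(i,j)}^{m}[t]$ -- which in Theorem \ref{thm:thm1} produced the indefinite Hessian $\begin{bmatrix}0 & \eta_{j}^{c}\eta_{i}^{m}\\ \eta_{j}^{c}\eta_{i}^{m} & 0\end{bmatrix}$ with eigenvalues $\pm\,\eta_{j}^{c}\eta_{i}^{m}$ -- has been split across the inequality: the left-hand side is affine in $e_{(i,j)}[t]$ only, and the right-hand side is affine in $x_{(i,j)}^{m}[t]$ only (since $E_{i}^{m}$, $e_{i}^{m}[t]$, $\eta_{j}^{c}$, $\eta_{i}^{m}$ are all constants at the scheduling epoch). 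The resulting inequality is affine, its Hessian is the zero matrix (positive semidefinite, eigenvalues $\{0,0\}$), so (\ref{eq:thm}) defines a convex half-space. The remaining constraints (\ref{eq:s2c2})--(\ref{eq:s2c5}) are plainly affine, so they also define convex half-spaces, and their intersection is convex as an intersection of convex sets.

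The main obstacle, which I would handle last, is the objective (\ref{eq:s2obj}): the product $v_{(i,j)}[t]\cdot x_{(i,j)}^{m}[t]$ is still bilinear when $e_{(i,j)}[t]$ and $x_{(i,j)}^{m}[t]$ are both treated as free variables. To close this gap I would invoke the two-stage structure of Sec.~\ref{sec:algo1}: the binary matching indicators $x_{(i,j)}^{m}[t]$ are decided by the upper-level bipartite assignment governed by (\ref{eq:s2c2}) and (\ref{eq:s2c3}), after which only the continuous allocations $e_{(i,j)}[t]$ remain. With $x_{(i,j)}^{m}[t]$ fixed, $v_{(i,j)}[t]\cdot x_{(i,j)}^{m}[t]$ is affine in $e_{(i,j)}[t]$ (because $v_{(i,j)}[t]$ itself is affine in $e_{(i,j)}[t]$), hence simultaneously concave and convex, and the residual allocation program is a linear program -- the canonical convex program. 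As a concluding sanity check I would verify that (\ref{eq:thm}) faithfully reproduces the intended semantics of (\ref{eq:s2c1}): whenever $x_{(i,j)}^{m}[t]=0$, (\ref{eq:thm}) together with the nonnegativity condition (\ref{eq:s2c5}) forces $e_{(i,j)}[t]=0$, so unmatched pairs still deliver no energy, exactly as required by the original formulation.
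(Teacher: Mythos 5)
Your treatment of the constraint set is sound and matches what the paper implicitly relies on: once the bilinear product in (\ref{eq:s2c1}) is split so that the left side is affine in $e_{(i,j)}[t]$ and the right side affine in $x_{(i,j)}^{m}[t]$, every constraint is affine and the feasible region is convex. Where you genuinely diverge from the paper is in how the objective is handled. You keep the bilinear objective $v_{(i,j)}[t]\cdot x_{(i,j)}^{m}[t]$ and escape its non-convexity by a two-stage decomposition: fix the binary indicators via the assignment constraints (\ref{eq:s2c2})--(\ref{eq:s2c3}) first, then observe that the residual problem in the continuous $e_{(i,j)}[t]$ is a linear program. The paper instead uses as its \emph{main engine} exactly the observation you relegate to a final sanity check: the case analysis showing that $x_{(i,j)}^{m}[t]=0$ together with (\ref{eq:thm}) and (\ref{eq:s2c5}) forces $e_{(i,j)}[t]=0$, while $x_{(i,j)}^{m}[t]=1$ recovers (\ref{eq:s2c1}). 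From this it concludes the indicator is redundant in the objective and rewrites (\ref{eq:s2obj}) as $\max \sum_{j}\sum_{i} v_{(i,j)}[t]$, which is affine in $e_{(i,j)}[t]$ and free of $x$; the final program (\ref{eq:s3obj})--(\ref{eq:s3c5}) is then a single program, jointly linear in both decision variables, rather than a bilevel scheme. The trade-off: the paper's route yields one convex program over $(e,x)$ jointly (at the cost of a somewhat under-justified objective rewrite, since $v_{(i,j)}[t]$ does not vanish at $e_{(i,j)}[t]=0$), whereas your route proves only conditional convexity given a fixed matching, which is a weaker statement than the theorem's literal claim that ``the program'' becomes convex. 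If you want your argument to land on the theorem as stated, you need the extra step the paper takes --- eliminating $x$ from the objective using the implication $x_{(i,j)}^{m}[t]=0 \Rightarrow e_{(i,j)}[t]=0$ --- rather than leaving the bilinear term in place and conditioning on $x$.
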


\begin{proof}
For the non-convex program (\ref{eq:s2obj})--(\ref{eq:s2c5}), 
$x_{(i,j)}^{m}[t]=0$ means the matching between charging drone $j$ and MBS drone $i$ does not happen. Thus, the corresponding charging is not occurred, i.e., $e_{(i,j)}=0$ and (\ref{eq:thm}) leads to the same result when $x_{(i,j)}^{m}[t]=0$, i.e.,
\begin{equation}
        \sum_{\forall j\in\mathcal{C}} e_{(i,j)}[t] \cdot \eta_{j}^{c}\cdot
    \eta_{i}^{m} \leq \left(E_{i}^{m}-e_{i}^{m}[t]\right)\cdot \underbrace{0}_{x_{(i,j)}^{m}[t]=0} =  0,  
\end{equation}
$\forall i\in \mathcal{M}$ and thus,
\begin{equation}
        \sum_{\forall j\in\mathcal{C}} e_{(i,j)}[t] \leq 0, \forall i\in \mathcal{M} 
\end{equation}
because  $\eta_{j}^{c}$ and $\eta_{i}^{m}$ are positive, then, $e_{(i,j)}[t]=0$ because $e_{(i,j)}[t]$ is also non-negative. 

On the other hand, if $x_{(i,j)}^{m}[t]=1$, this (\ref{eq:thm}) is equivalent to (\ref{eq:s2c1}).
Therefore, in turn, (\ref{eq:s2obj}) is updated as follows: 
\begin{equation}
    \max: \sum_{\forall j\in\mathcal{C}} \sum_{\forall i\in\mathcal{M}} 
    v_{(i,j)}[t].
\end{equation}

Then, it is obvious that there are no non-convex terms in this proposed program (\ref{eq:s2obj})--(\ref{eq:s2c5}).
\end{proof}

Eventually, the final form of mobile charging scheduling for multi-drone mobile base station life time extension can be expressed as follows:

\begin{eqnarray}
    \max: & & \sum_{\forall j\in\mathcal{C}} \sum_{\forall i\in\mathcal{M}} 
    v_{(i,j)}[t]
    \label{eq:s3obj} \\
    \text{s.t.} & & 
    \sum_{\forall j\in\mathcal{C}} e_{(i,j)}[t] \cdot \eta_{j}^{c}\cdot
    \eta_{i}^{m} \leq \nonumber \\
    & & \quad\quad\quad \left(E_{i}^{m}-e_{i}^{m}[t]\right)\cdot x_{(i,j)}^{m}[t], \forall i\in \mathcal{M} \label{eq:s3c1} \\
     & & 
    \sum_{\forall j\in\mathcal{C}} x_{(i,j)}^{m}[t] \leq n_{i}^{m}, \forall i\in \mathcal{M} \label{eq:s3c2} \\
     & & 
    \sum_{\forall i\in\mathcal{M}} x_{(i,j)}^{m}[t] \leq 1, \forall j\in \mathcal{C} \label{eq:s3c3} \\
     & & 
     \sum_{\forall i\in\mathcal{M}} e_{(i,j)}[t] \leq e_{j}^{c}[t], \forall j\in \mathcal{C} \label{eq:s3c4} \\
     & & 
     e_{(i,j)}[t] \geq 0, \forall i\in \mathcal{M}, \forall j\in \mathcal{C} \label{eq:s3c5}
\end{eqnarray}
where $v_{(i,j)}[t]$ is defined in (\ref{eq:value}) and our control/decision variables are $e_{(i,j)}[t]$ and $x_{(i,j)}^{m}[t]$, $\forall i\in\mathcal{M}$, $\forall j\in\mathcal{C}$. 
Note that the equations, i.e., (\ref{eq:s3obj})--(\ref{eq:s3c5}), in this mathematical program are all linear combinations, by Theorem~\ref{thm:thm1} and Theorem~\ref{thm:thm2}.

\begin{figure}[t]
    \centering
        \includegraphics[width =0.99\linewidth]{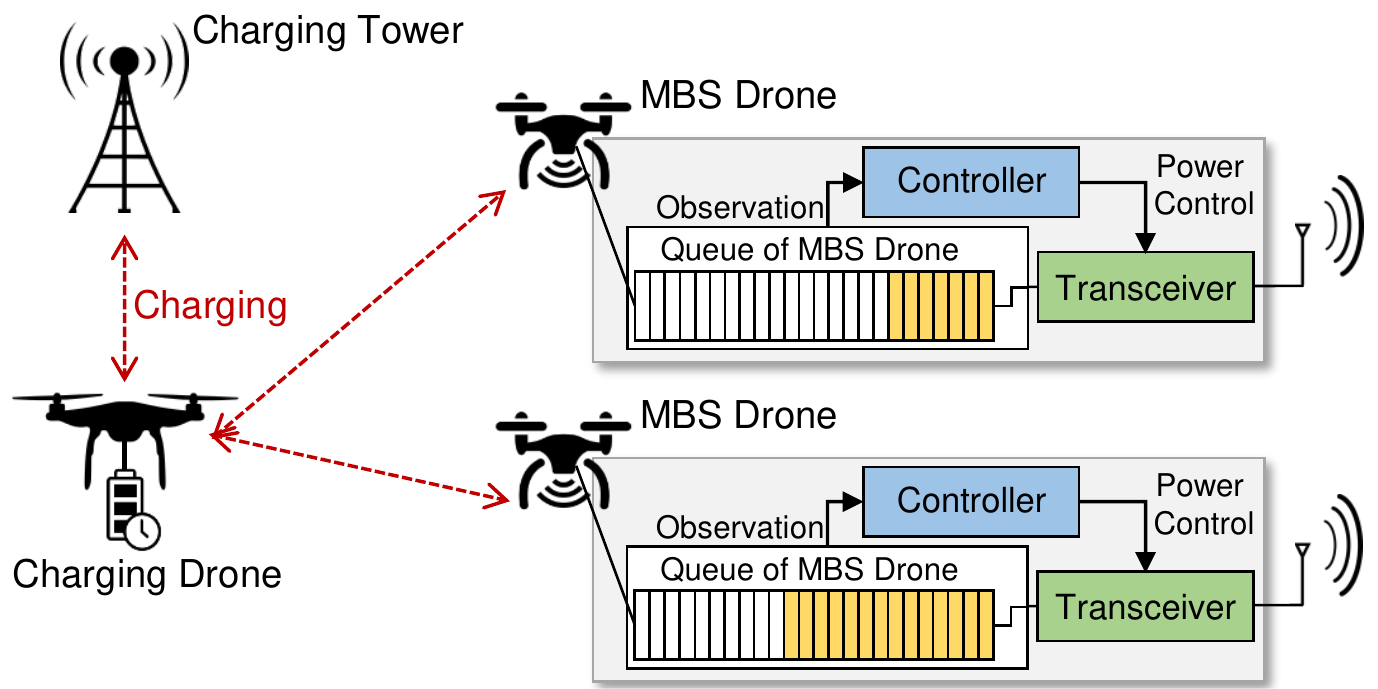}
    \caption{Illustration of MBS drone communication architecture.}
    \label{fig:fig_lyapunov}
\end{figure}

\subsubsection{Delay-aware time-average optimal control via Lyapunov optimization}\label{sec:algo2}

In our system, MBS drones essentially require energy-efficient operation because they are power hungry. In each MBS drone $i$ where $\forall i\in\mathcal{M}$, allocating small transmit power is helpful in terms of energy-efficient operations. However, it transmits relatively small number of packets over the air which introduces delays with the queue of the MBS drone $i$ where $\forall i\in\mathcal{M}$. On the other hand, large transmit power allocation introduces more packet transmission whereas it consumes more energy sources.
Eventually, we can observe the tradeoff between our objective (i.e., energy-efficiency) and stability. This paper designs a dynamic power allocation method for energy-efficient operations while guaranteeing queue stability. Note that this kind of delay-aware time-average optimization is definitely needed for computing system design and implementation~\cite{tvt2019minseok,ton2016joongheon,mm2017jonghoe,tmc2019jonghoe}.
Our considering network architecture is as shown in Fig.~\ref{fig:fig_lyapunov}.
As explained, each MBS drone observes its own queue and then its own controller makes the decision for transmit power control/allocation to its own transceiver aiming at time-average energy-efficiency maximization subject to queue stability. 

We formulate the mathematical program for minimizing the time-average energy consumption for processing packets from the queue, i.e., $E(\alpha[t])$, where the power allocation decision is $\alpha[t]$, can be presented as follows:
\begin{eqnarray}
\min: & & \lim_{t\rightarrow\infty}\sum_{\tau=0}^{t-1} E(\alpha[\tau])
\label{eq:opt} \\
\text{s.t.} & & \lim_{t\rightarrow\infty}\frac{1}{t}\sum_{\tau=0}^{t-1} Q[\tau]<\infty \text{ (stability constraint)}
\end{eqnarray}

In \eqref{eq:opt}, $E(\alpha[t])$ stands for the energy consumption for queue departure process $b(\alpha[t])$ when the given power allocation decision is $\alpha[t]$. 
As mentioned earlier, the power allocation decision generates a tradeoff between the minimization of energy consumption and stability of the queuing system which is related to the average queuing delay. 

Respect to this tradeoff, the Lyapunov optimization theory-based drift-plus-penalty (DPP) algorithm~\cite{book2010sno,tvt2019minseok, ton2016joongheon, mm2017jonghoe,tmc2019jonghoe,isj2020soyi} can be used for optimizing the time-average utility function (i.e., energy consumption) subject to queue stability.
Define the Lyapunov function 
$L(Q[t]) = \frac{1}{2}(Q[t])^2$,
and let $\Delta(.)$ be a conditional quadratic Lyapunov function that can be formulated as 
\begin{equation}
\mathbb{E}[L(Q[t+1])-L(Q[t])| Q[t]], 
\end{equation}
called as the drift on $t$. 
After the MBS drone $i$ where $\forall i \in\mathcal{M}$ observes the current queue length $Q(t)$, how much power sources are required in each time slot.
According to \cite{book2010sno}, this dynamic policy is designed to achieve queue stability by minimizing an upper bound on drift-plus-penalty which is given by
\begin{equation}
    \Delta(Q[t]) + V \mathbb{E} \Big[ E(\alpha[t]) \Big],
\end{equation}
where $V$ is an importance weight for power efficiency. 

The upper bound on the drift of the Lyapunov function on $t$ can be derived as follows:
\begin{eqnarray}
    L(Q[t+1]) - L(Q[t]) &=& \frac{1}{2}\Big( Q([t+1]^2 - Q[t]^2 \Big) \\
    &\leq& \frac{1}{2} \Big( a[t]^2 + b(\alpha[t])^2 \Big) + \\
    & & Q[t] (a[t] - b(\alpha[t])).
\end{eqnarray}

Therefore, the upper bound on the conditional Lyapunov drift is obtained as follows:
\begin{align}
    \Delta(Q(t)) &= \mathbb{E}[L(Q[t+1]) - L(Q[t]) | Q[t]] \nonumber \\
    &\leq C + \mathbb{E}\Big[ Q[t](a[t] - b(\alpha[t]) \Big| Q[t] \Big],
\end{align}
where $C$ is a constant which can be given by
\begin{equation}
    \frac{1}{2}\mathbb{E}\Big[ a[t]^2 + b(\alpha[t])^2 \Big| Q[t] \Big] \leq C,
\end{equation}
which assumes that the arrival and departure process rates are upper bounded.
According to the fact that $C$ is a constant and the arrival process $a[t]$ is not controllable, the minimization of the upper bound on drift-plus-penalty can be as follows:
\begin{equation}
    V \mathbb{E}\Big[ E(\alpha[t]) \Big] - \mathbb{E}\Big[ Q[t]\cdot b(\alpha[t]) \Big]. \label{eq:dpp_expected}
\end{equation}

Here, the concept of opportunistically minimizing the expectations is used; therefore, \eqref{eq:dpp_expected} can be minimized by an algorithm that observes the current queue state $Q[t]$ and determines $\alpha[t]$ at every slot $t$, as follows:
\begin{equation}
    \alpha^{*}[t]\leftarrow
    \arg\min_{\alpha[t]\in\mathcal{A}}
    \left[
        V\cdot E(\alpha[\tau]) - Q[t]b(\alpha[t])
    \right]
\label{eq:lyapunov-final}
\end{equation}
where $\alpha^{*}[t]$ is the optimal decision at time $t$.

In order to verity whether \eqref{eq:lyapunov-final} works as desired, suppose that $Q[t]=0$. Then, the \eqref{eq:lyapunov-final} tries to minimize $V\cdot E(\alpha[t])$, i.e., the amount of allocated transmit powers should be reduced for energy consumption minimization. This is semantically true because we can focus on the main objective, i.e., energy-efficient computing, because stability is already achieved at this moment. 
On the other hand, suppose that $Q[t]\approx \infty$. Then, the \eqref{eq:lyapunov-final} tries to maximize $b(\alpha[t])$, i.e., the amount of allocated transmit power should be increased for speeding up the service process of $Q[t]$. This is also true because stability should be mainly considered when $Q[t]$ even though we sacrifice certain amounts of energy-efficiency to avoid overflow.

Finally, we confirm that our proposed closed-form mathematical formulation, i.e., \eqref{eq:lyapunov-final}, controls $\alpha[t]$ for minimizing time-average energy consumption subject to queue stability.

\subsubsection{Complexity of the proposed method}\label{sec:algo-complexity}
The proposed method consists of two algorithms. The first algorithm in Sec.~\ref{sec:algo1}, \textit{i.e.}, two-stage mobile charging matching, basically solves two sequential mixed integer ($0$-$1$ binary) convex optimization programs. Since the $0$-$1$ terms (\textit{i.e.}, $x\in\{0, 1\}$) can be relaxed to real values with rounding, our problems can be solvable via pure convex programming, thus resulting in a polynomial-time complexity~\cite{convex,tbc2013joongheon,icc2013joongheon}. 
In addition, the second algorithm in Sec.~\ref{sec:algo2} has the complexity of $O(N)$. 
Finally, we can confirm that the proposed method works in polynomial-time.

\section{Performance Evaluation}\label{sec:4}
This section describes our simulation setup for performance evaluation (refer to Section~\ref{sec:4-1}) and its related evaluation results (refer to Section~\ref{sec:4-2}).

\begin{table}[t!]
\centering
\caption{Simulation parameters~\cite{ref1}}
\begin{tabular}{@{}ll@{}}\toprule
\multicolumn{1}{l}{\centering\textbf{UAV Parameters}} & \multicolumn{1}{l}{\centering\textbf{Value}}\\ \midrule
\multicolumn{1}{l}{Aircraft weight} & \multicolumn{1}{l}{1375 g}\\ 
\multicolumn{1}{l}{Flight speed (max)} & \multicolumn{1}{l}{20 m/s}\\
\multicolumn{1}{l}{Flight time (max)} & \multicolumn{1}{l}{30 min}\\ 
\multicolumn{1}{l}{Capacity of flight battery} & \multicolumn{1}{l}{5870 mAh} \\
\multicolumn{1}{l}{Charging power of flight battery (max)} & \multicolumn{1}{l}{160 W} \\
\multicolumn{1}{l}{Voltage of charger} & \multicolumn{1}{l}{17.4 V} \\
\multicolumn{1}{l}{Rated power of charger} & \multicolumn{1}{l}{100 W} \\
\multicolumn{1}{l}{Charging efficiency $\eta_{k}^{t}$ and $\eta_{j}^{c}$ in (\ref{eq:charging1}) and (\ref{eq:charging2})~\cite{ref2}} & \multicolumn{1}{l}{0.81} \\ \midrule
\multicolumn{1}{l}{\centering\textbf{System Parameters}} & \multicolumn{1}{l}{\centering\textbf{Value}}\\ \midrule
\multicolumn{1}{l}{Map size} & \multicolumn{1}{l}{1299 m $\times$ 750 m}\\
\multicolumn{1}{l}{Number of MBS drones} & \multicolumn{1}{l}{25 (default)}\\
\multicolumn{1}{l}{Number of charging drones} & \multicolumn{1}{l}{50 (default)}\\
\multicolumn{1}{l}{Number of charging towers} & \multicolumn{1}{l}{1 (default)}\\
\multicolumn{1}{l}{Altitude of surveillance UAVs} & \multicolumn{1}{l}{100 m}\\ 
\multicolumn{1}{l}{Number of panels in the charging tower} & \multicolumn{1}{l}{4}\\ 
\multicolumn{1}{l}{Simulation time} & \multicolumn{1}{l}{60 and 120 min}\\ \toprule
\end{tabular}
\label{parameter1}
\end{table}

\subsection{Simulation Setup}\label{sec:4-1}
The performance of the proposed charging and dynamic transmit power allocation method is evaluated via data-intensive simulations. The simulator is designed with \textsf{cvxpy}~\cite{cvxpy}, where parameters used in this paper essentially follow those in~\cite{iotj2018cho,ref1,ref2} and are summarized in Table~\ref{parameter1}.

As performance metrics in the sense of energy saving, we use 1) the {\em residual energy} at each drone that has been served after matching, 2) the {\em coverage-time} (i.e., the time when drones start to drop)~\cite{ton10,vtc06s}, and 3) the {\em queue backlog} that measures the queue stability. Since our charging network model has never been studied in the literature, there is no state-of-the-art method for fair comparison; thus, we adopt a random strategy and two-types of greedy strategies, namely greedy-worst and greedy best strategies, as baselines for our two-stage mobile charging matching. Here, the random strategy performs scheduling at random; and the greedy-worst (greedy-best) strategy allocates more (less) weights to the served drones (i.e., charging drones in the first-stage matching and MBS drones in the second-stage matching) which have higher amounts of residual energy.
Additionally, we adopt the maximum and minimum transmit power allocation strategies, dubbed `Max PA' and  `min PA', respectively, as baseline for our distributed time-average transmit power allocation.

\subsection{Evaluation Results}\label{sec:4-2}

In this subsection, various simulations are carried out to validate the effectiveness of our method in terms of the energy saving and queue stability.

\begin{figure}[t]
    \centering
    \includegraphics[width =0.99\linewidth]{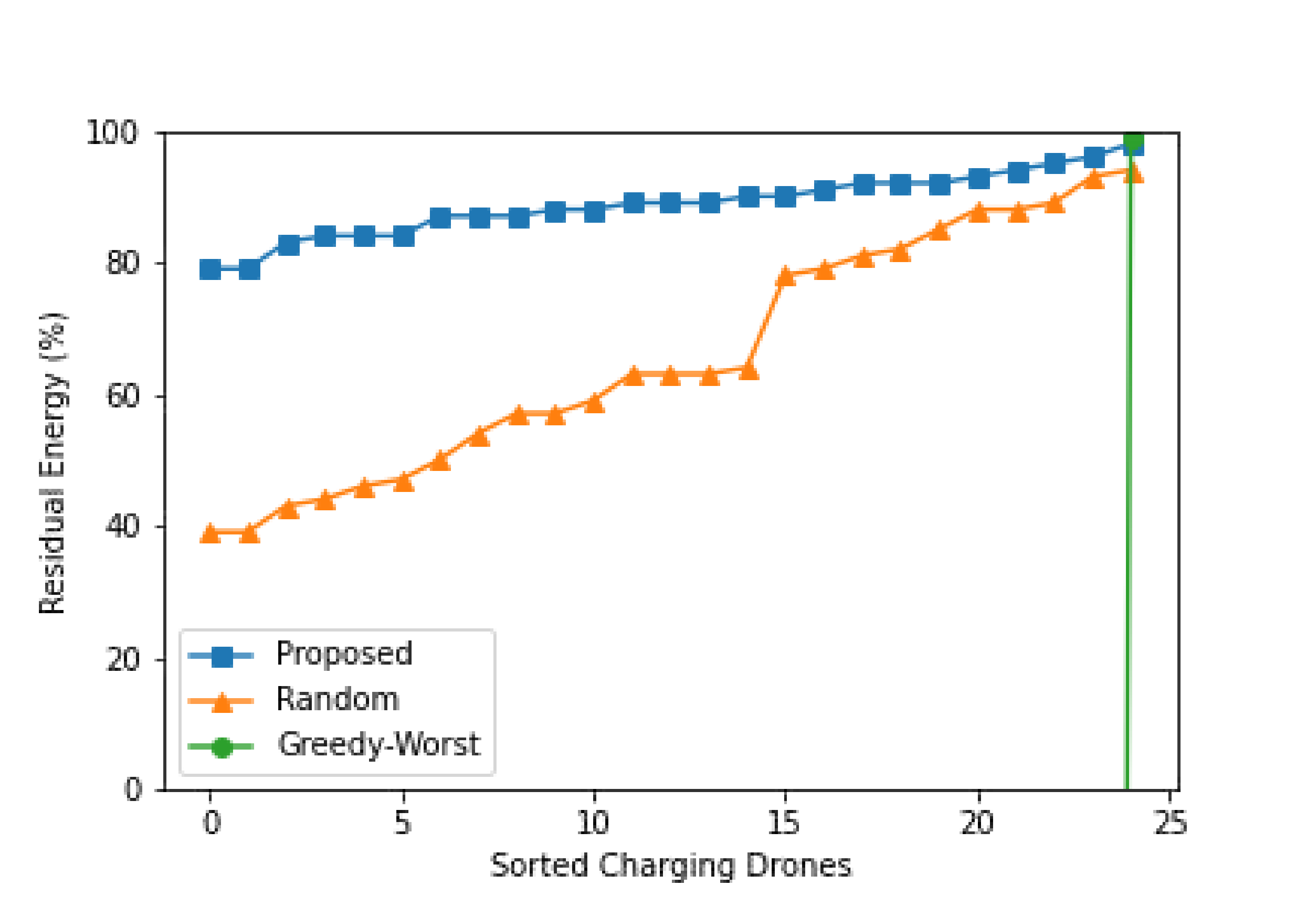}
    \caption{Illustration of residual energy at charging drones in ascending order.}
    \label{fig:sim1}
\end{figure}

\begin{table}[t]
\caption{Statistics of the average amount of residual energy at charging drones}
\label{tab:sim1}
\begin{center}
	\centering
	\begin{tabular}{r||r|r}
    \toprule[1.0pt]
    \centering
      & Proposed & Random \\
    \midrule[1.0pt]
    Average  & $85.08$\% & $55.96$\% \\
    Standard Deviation & $5.48$ & $20.48$ \\
    \bottomrule[1.0pt]
	\end{tabular}
\end{center}
\end{table}

First, we evaluate the performance of the first-stage matching (i.e., matching between charging towers and charging drones).
In Fig.~\ref{fig:sim1}, the residual energy at charging drones is illustrated in ascending order (from the charging drone with the lowest residual energy status up to the one with the highest residual energy status), where 25 charging drones and a single charging tower are deployed in our network.
From Fig.~\ref{fig:sim1}, the following insightful observations are found: 1) it is obvious to see that the proposed method is quite superior to the two baseline strategies in terms of the residual energy; 2) the residual energy tends to be evenly allocated over all charging drones since the optimization in the first-stage matching of our method is formulated as $\left(E_{j}^{c}-e_{j}^{c}[t]\right)$ for charging drone $j$ where $\forall j \in \mathcal{C}$, which takes into account fairness, whereas the random method exhibits a much higher standard deviation of the residual energy among charging drones as it does not account for the fairness; and 3) the greedy-worst strategy consistently charges only one drone by up to the full energy, which implies that the other charging drones do not have an opportunity to be charged, as illustrated in the figure.
The average amount of residual energy at charging drones is summarized in Table~\ref{tab:sim1}. Our method reveals significant energy saving gains over the random strategy while guaranteeing the fairness. More specifically, it is shown from the table that our method offers $52.04$\% and $26.77$\% improvements in terms of the average energy and standard deviation, respectively.

\begin{figure}[t]
    \centering
    \includegraphics[width =0.99\linewidth]{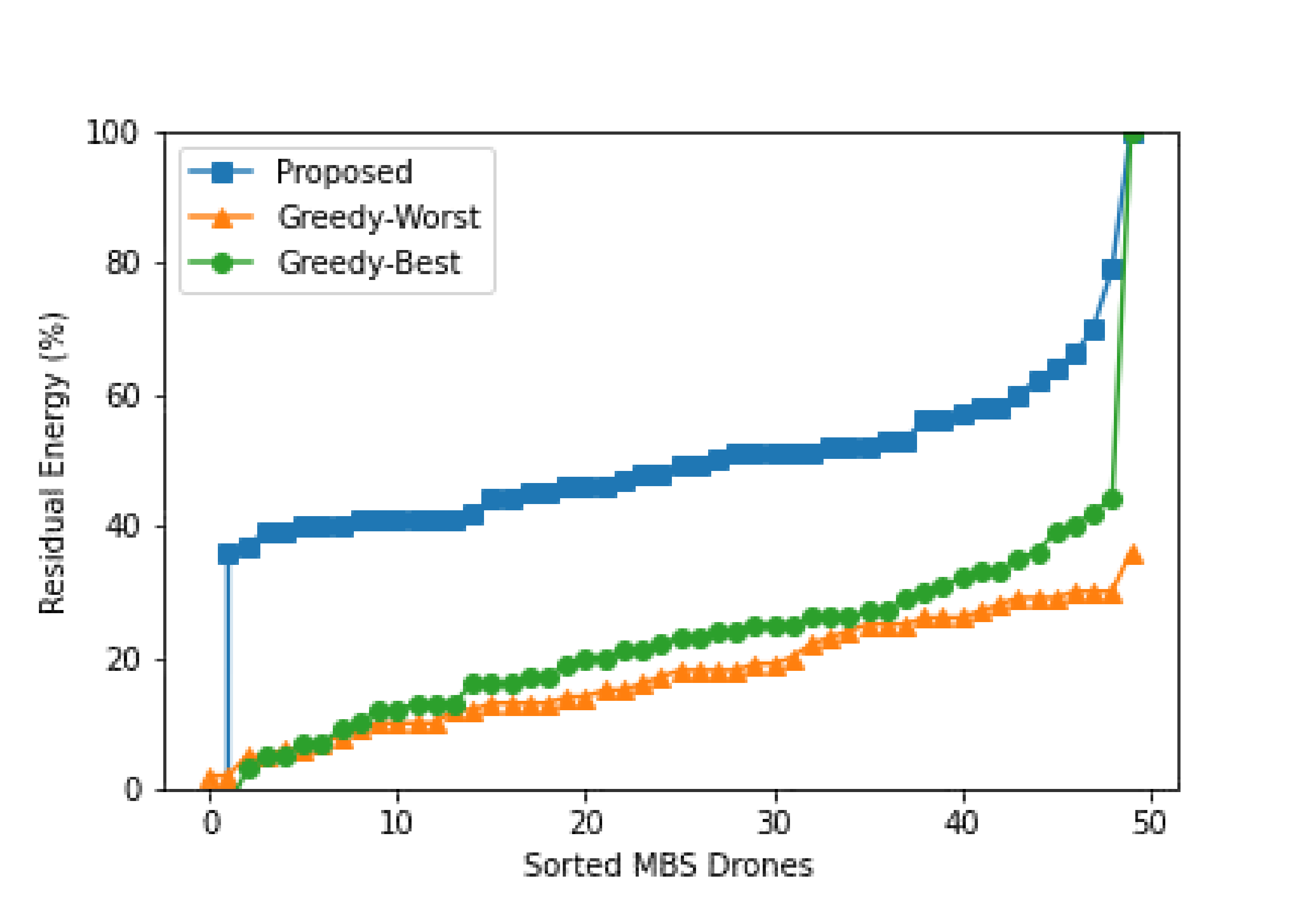}
    \caption{Illustration of residual energy at MBS drones in ascending order.}
    \label{fig:sim2}
\end{figure}

\begin{table}[t]
\caption{Statistics of the average amount of residual energy at MBS drones}
\label{tab:sim2}
\begin{center}
	\centering
	\begin{tabular}{r||r|r|r}
    \toprule[1.0pt]
    \centering
     & Proposed & Greedy-Best & Greedy-Worst \\
    \midrule[1.0pt]
    Average  & $62.36$\% & $26.82$\% & $17.36$\% \\
    \bottomrule[1.0pt]
	\end{tabular}
\end{center}
\end{table}

Second, we turn to evaluating the performance of the second-stage matching (i.e., matching between charging drones and MBS drones).
In Fig.~\ref{fig:sim2}, the residual energy at MBS drones is illustrated in ascending order (from the MBS drone with the lowest residual energy status to the one with the highest residual energy status), where $25$ charging drones and $50$ MBS drones are deployed in our network.
As depicted in the figure, our second-stage matching charges MBS drones much more fairly and efficiently than baseline schemes, where the greedy-best (greedy-worst) method allocates charging drones to MBS drones in ascending (descending) order of the residual energy at each MBS drone. This is because our method is designed in such a way that the value function in~\eqref{eq:value} jointly takes into account (i) the distance between two-types of drones, (ii) the energy status at charging drones $\forall j\in \mathcal{C}$, and (iii) the energy status at MBS drones $\forall i\in \mathcal{M}$.
The average amount of residual energy at MBS drones is summarized in Table~\ref{tab:sim2}. Our second-stage matching also reveals dramatic energy saving gains over the two greedy strategies. More specifically, it is shown from the table that our method offers $132.51$\% and $259.22$\% improvements over the greedy-best and greedy-worst strategies, respectively, in terms of the average energy.

\begin{figure}[t]
    \centering
    \includegraphics[width=0.99\linewidth]{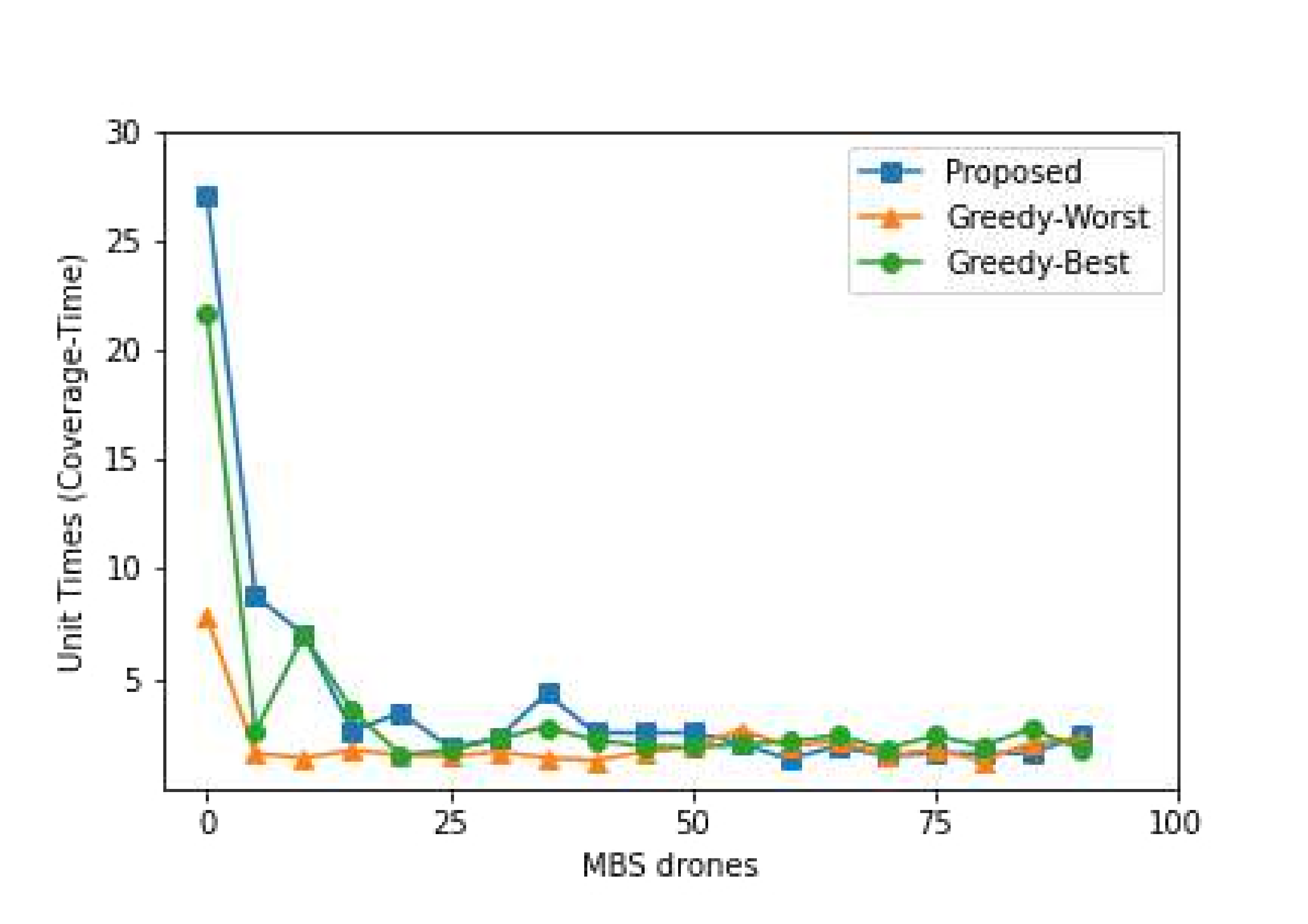}
    \caption{Illustration of the coverage-time according to the number of MBS drones.}
    \label{fig:sim3}
\end{figure}

\begin{figure}[t]
    \centering
    \includegraphics[width=0.99\linewidth]{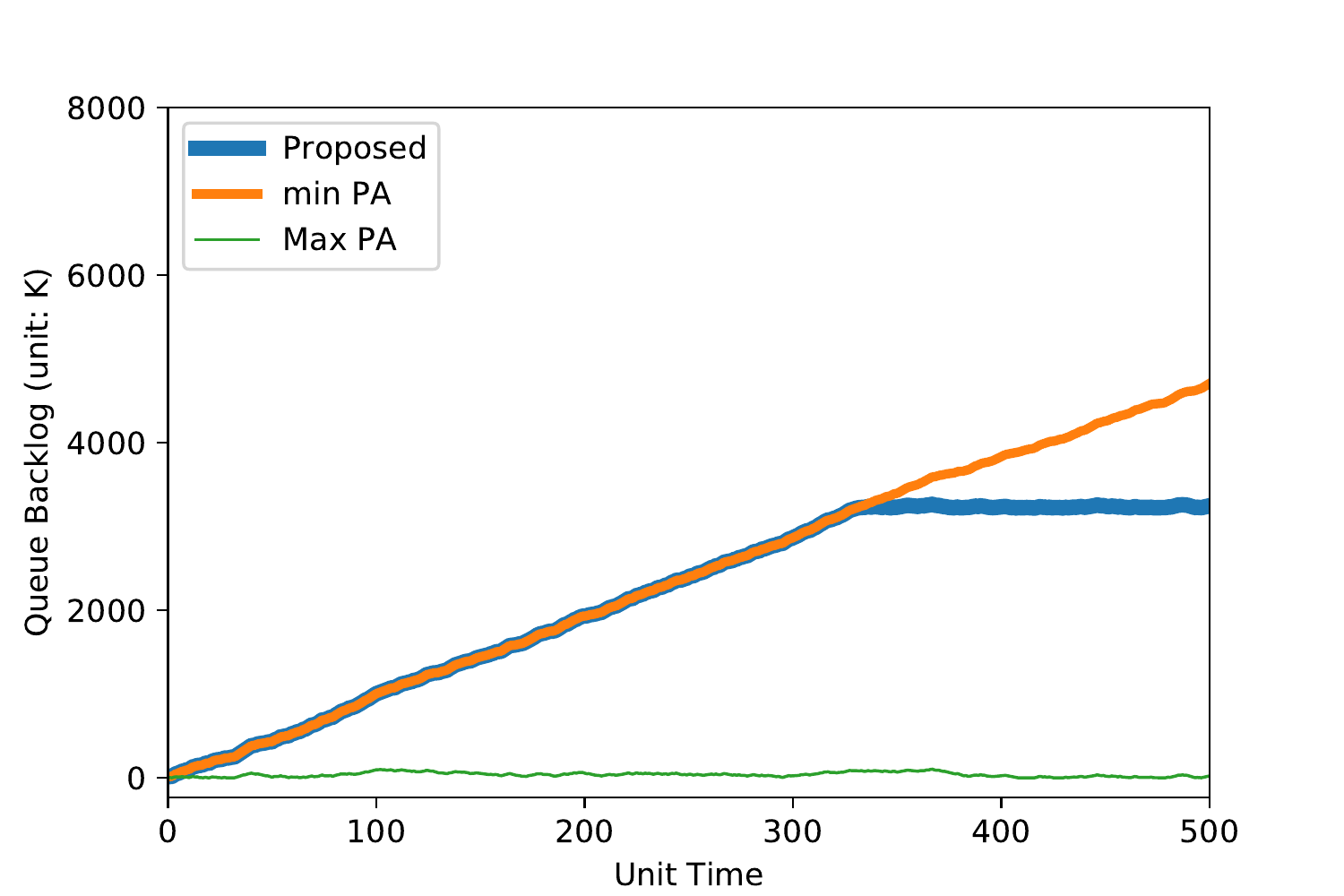}
    \caption{Queue dynamics behaviors in each MBS drone.}
    \label{fig:sim4}
\end{figure}

Third, Fig.~\ref{fig:sim3} illustrates the coverage-time (in the number of unit times)~\cite{ton10,vtc06s} of our method as the number of MBS drones, denoted by $|\mathcal{M}|$, increases, where the number of charging drones, denoted by $|\mathcal{C}|$, is set to $25$ and the number of charging plates in each MBS drone, i.e., $n_{i}^{m}$ in \eqref{eq:s3c2}, is set to $1$. 
In Fig.~\ref{fig:sim3}, $x$-axis and $y$-axis stand for the number of MBS drones (from $1$ up to $50$) and the number of unit times, respectively.
As observed in Fig.~\ref{fig:sim3}, the coverage-time consistently decreases until the number of MBS drones equals to the number of charging drones --- the curve tends to be flattened  after $|\mathcal{M}|=5$ in our setting. That is, when $|\mathcal{M}| >|\mathcal{C}|$, it is highly likely that some MBS drones are not properly scheduled and will drop. The reason behind this is explained as follows. Each MBS drone can be served by charging drones in each unit time; thus, it is possible to maintain a certain amount of energy at each MBS drone. However, due to the distance between MBS and charging drones and the energy status at each charging drone, the status at each MBS drone would be consistently deteriorated over time. It remains open how to further enhance the performance on the coverage-time in such a case (i.e., $|\mathcal{M}| >|\mathcal{C}|$).

Lastly, Fig.~\ref{fig:sim4} illustrates the performance of our energy-aware time-average transmit power allocation in Sec.~\ref{sec:algo2} in comparison with `Max PA' and `min PA' strategies by plotting the queue backlog (unit: $10^{3}$ bits in Fig.~\ref{fig:sim4}) versus the number of unit-times.
From the figure, our insightful findings are described as follows. The queue stability of `Max PA' is consistently guaranteed since the backlog is almost empty in each unit-time at the cost of significant reduction in energy efficiency; thus, the situations of zero delay can be realized. On the other hand, the stability of `min PA' is not achieved as the queue backlog size is monotonically increasing with the number of unit-times, thus resulting in the divergence, which may cause a huge amount of delay, and the built-in queue in each MBS drone may not be long enough due to the physical limitations in storage and processing. That is, queue overflow would occur beyond a certain unit-time when `min PA' is employed.
On the contrary, our dynamic time-average transmit power allocation method is capable of guaranteeing both queue stability refer to Fig.~\ref{fig:sim4} along with bounded delays and energy efficient successfully. More specifically, in the early stages, our method tends to allocate a small mount of power for energy efficiency since the queue backlog is not long yet. As the number of unit-times increases, the queue backlog reaches at a certain point where the transmit power needs to be controlled in order to stabilize the queue status --- it is observed from Fig.~\ref{fig:sim4} that the queue backlog becomes stable after 340 unit-times. In consequence, our method operates balancing appropriately between the queue stability and the energy efficiency, which thus enables our network to extend the coverage-time as well.

\section{Conclusions and Future Work}\label{sec:5}
This paper proposes a novel joint mobile charging and coverage-time extension method for drone-enabled future cellular networks. 
First of all, a two-stage mobile charging matching algorithm is designed and optimized where the first stage is for matching between charging towers and charging drones and the second stage is for matching between charging drones and MBS drones. In this case, the second stage needs to optimize the matching/scheduling and the allocation of powers in each scheduled pair, i.e., non-convex terms exist, as discussed. In this paper, we convert the given non-convex terms into convex terms, and then eventually, we formulate the given problem as convex programming which can guarantee optimal solutions. Next, we also design distributed time-average optimal transmit power allocation algorithm subject to queue stability in each MBS drones, inspired by Lyapunov optimization theory -- drift-plus-penalty algorithm. 
As presented in performance evaluation via data-intensive simulations, our proposed method achieves desired performance improvement, in terms of i) energy efficiency in each charging drone, ii) energy efficiency in each MBS drone, and iii) coverage-time extension in each MBS drone.

We remark that the proposed algorithm in this paper has built purely upon the energy-efficient operations for coverage-time extension in UAV networks. As a future avenue, on top of this proposed energy-aware framework, it is worthy to consider additional performance enhancement in terms of sum rate maximization, quality of services (QoS), quality of experience (QoE), and so forth.



\begin{IEEEbiography}[{\includegraphics[width=1in,height=1.25in,clip,keepaspectratio]{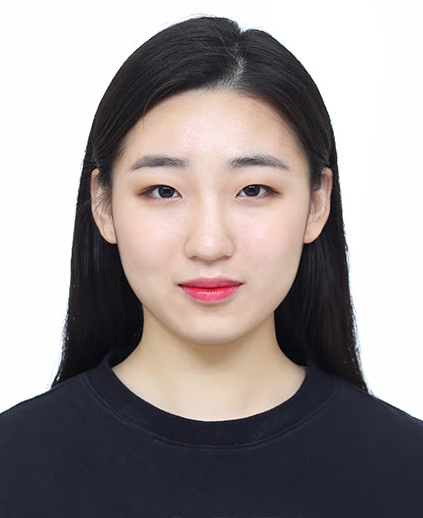}}]{Soohyun Park}
is currently pursuing her Ph.D. degree in electrical engineering at Korea University, Seoul, Republic of Korea. 
Her research focuses include deep learning algorithms and their applications. 
She was a recipient of the IEEE Vehicular Technology Society (VTS) Seoul Chapter Award in 2019.
\end{IEEEbiography}

\begin{IEEEbiography}[{\includegraphics[width=1in,height=1.25in,clip,keepaspectratio]{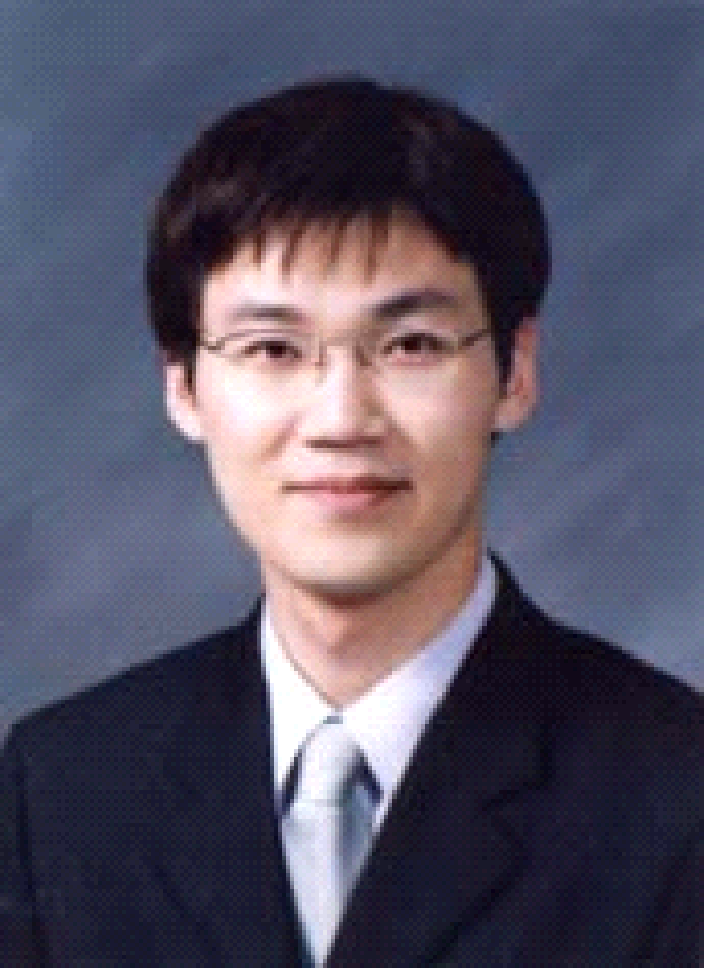}}]{Won-Yong Shin} (S’02-M’08-SM’16) received the B.S. degree in electrical engineering from Yonsei University, Seoul, Republic of Korea, in 2002. He received the M.S. and the Ph.D. degrees in electrical engineering and computer science from the Korea Advanced Institute of Science and Technology (KAIST), Daejeon, Republic of Korea, in 2004 and 2008, respectively. In May 2009, Dr. Shin joined the School of Engineering and Applied Sciences, Harvard University, Cambridge, MA USA, as a Postdoctoral Fellow and was promoted to a Research Associate in October 2011. From March 2012 to February 2019, he was a Faculty Member of the
Department of Computer Science and Engineering, Dankook University, Yongin, Republic of Korea. Since March 2019, he has been with the School of Mathematics and Computing (Computational Science and Engineering), Yonsei University, Seoul, Republic of Korea, where he is currently an Associate Professor. His research interests are in the areas of information theory, communications, signal processing, mobile computing, big data analytics, and online social networks analysis.

From 2014 to 2018, Dr. Shin served as an Associate Editor of the
\textit{IEICE Transactions on Fundamentals of Electronics, Communications
and Computer Sciences}. He also served as an Organizing Committee
Member for the 2015 IEEE Information Theory Workshop. He received
the Bronze Prize of the Samsung Humantech Paper Contest (2008) and
the KICS Haedong Young Scholar Award (2016).
\end{IEEEbiography}

\begin{IEEEbiography}[{\includegraphics[width=1in,height=1.25in,clip,keepaspectratio]{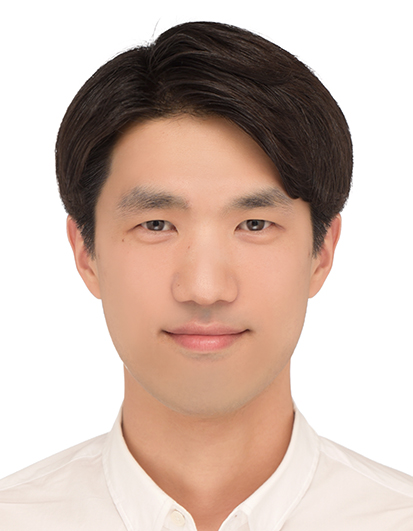}}]{Minseok Choi} has been an assistant professor with Jeju National University, Jeju, Korea, since 2020.
He received the B.S., M.S. and Ph.D. degrees in the School of Electrical Engineering from Korea Advanced Institute of Science and Technology (KAIST), Daejeon, Korea, in 2011, 2013, and 2018, respectively. 
He was a visiting postdoctoral researcher in electrical and computer engineering at the University of Southern California (USC), Los Angeles, CA, USA, and a research professor in electrical engineering at the Korea University, Seoul, Korea.
His research interests include wireless caching networks, stochastic network optimization, non-orthogonal multiple access, and 5G networks.
\end{IEEEbiography}

\begin{IEEEbiography}[{\includegraphics[width=1in,height=1.25in,clip,keepaspectratio]{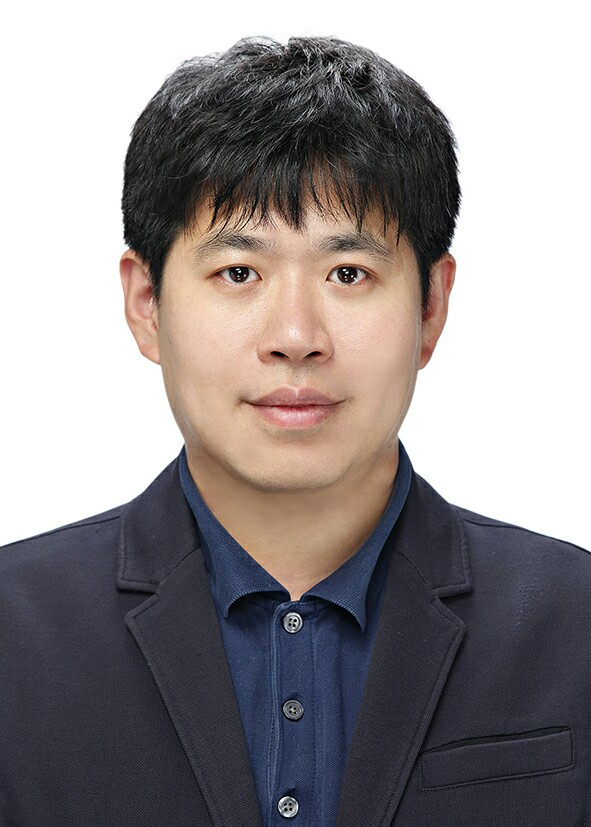}}]{Joongheon Kim} (M'06--SM'18) has been with Korea University, Seoul, Korea, since 2019, and he is currently an associate professor at the Department of Electrical and Computer Engineering. He is also a vice director of Artificial Intelligence Engineering Research Center and a dean of Center for Teaching and Learning at Korea University, Seoul, Korea. 
He received the B.S. and M.S. degrees in Computer Science and Engineering from Korea University, Seoul, Korea, in 2004 and 2006, respectively; and the Ph.D. degree in Computer Science from the University of Southern California (USC), Los Angeles, CA, USA, in 2014. 
Before joining Korea University, he was with LG Electronics (Seoul, Korea, 2006--2009), InterDigital (San Diego, CA, USA, 2012), Intel Corporation (Santa Clara in Silicon Valley, CA, USA, 2013--2016), and Chung-Ang University (Seoul, Korea, 2016--2019). 

He is a senior member of the IEEE, and serves as an associate editor for \textsc{IEEE Transactions on Vehicular Technology}. He internationally published more than 90 journals, 110 conference papers, and 6 book chapters. He also holds more than 50 patents, majorly for 60\,GHz millimeter-wave IEEE 802.11ad and IEEE 802.11ay standardization. 

He was a recipient of Annenberg Graduate Fellowship with his Ph.D. admission from USC (2009), 
Intel Corporation Next Generation and Standards (NGS) Division Recognition Award (2015), Haedong Young Scholar Award by KICS (The Korean Institute of Communications and Information Sciences) (2018), IEEE Vehicular Technology Society (VTS) Seoul Chapter Award (2019), Outstanding Contribution Award by KICS (2019), Paper Awards from IEEE Seoul Section Student Paper Contests (2019, 2020), Best Teaching Awards by Korea University (Top 5\% in Fall-2019, Top 20\% in Fall-2020), \textsc{IEEE Systems Journal} Best Paper Award (2020), IEEE ICOIN Best Paper Award (2021), and Haedong Paper Award by KICS (2021).
\end{IEEEbiography}

\EOD

\end{document}